\newtheorem{theorem}{Theorem}[section]
\newtheorem{lemma}[theorem]{Lemma}
\newtheorem{proposition}[theorem]{Proposition}
\title{Zero-Suppressed Computation:\\ A New Computation Inspired by ZDDs\footnote{a preliminary version}}
\author{Hiroki Morizumi\\
{\small Interdisciplinary Graduate School of Science and Engineering, Shimane University,}\\
{\small Shimane 690-8504, Japan}\\
{\small morizumi@cis.shimane-u.ac.jp}
}
\date{}
\begin{document}

\maketitle

\begin{abstract}
Zero-suppressed binary decision diagrams (ZDDs) are a data structure
representing Boolean functions, and one of the most successful variants
of binary decision diagrams (BDDs).
On the other hand, BDDs are also called branching programs in computational
complexity theory, and have been studied as a computation model. 
In this paper, we consider ZDDs from the viewpoint of computational
complexity theory.
Our main proposal of this paper is that we regard the basic idea of ZDDs 
as a new computation, which we call zero-suppressed computation.
We consider the zero-suppressed version of two classical computation models,
decision trees and branching programs, and show some results.
Although this paper is mainly written from the viewpoint of computational
complexity theory, the concept of zero-suppressed computation can be
widely applied to various areas.
\end{abstract}

\section{Introduction}

Zero-suppressed binary decision diagrams (ZDDs) are a data structure
representing Boolean functions, introduced by Minato~\cite{Mina93},
and one of the most successful variants of binary decision diagrams (BDDs).
Knuth has referred to ZDDs as an important variant of BDDs
in his book~\cite{Knu09}, and ZDDs are also referred to in
other books~\cite{MT98,Weg00}.
On the other hand, BDDs are also called branching programs in computational
complexity theory, and have been studied as a computation model. 
In this paper, we consider ZDDs from the viewpoint
of computational complexity theory.

ZDDs have the same shape as BDDs (and branching programs) have, and
the only difference is the way to determine the output.
An assignment to the variables determines a computation path from
the start node to a sink node.
ZDDs output 1 iff the value of the sink node is 1 and all variables
which are not contained in the computation path are assigned by 0.
(See Section~\ref{sec:pre} for the formal definitions.)
ZDDs have been considered to be effective for Boolean functions whose
outputs are almost 0.
Our main proposal of this paper is that we regard the basic idea of ZDDs 
as a new computation, which we call {\em zero-suppressed computation}.
We consider the zero-suppressed version of two classical computation models,
decision trees and branching programs.

The first computation model is decision trees.
Although decision trees appear in various areas, it is also a computation
model to compute Boolean functions in computational complexity theory.
We consider zero-suppressed computation for this model.
For randomized computation and quantum computation, variants of
decision trees (i.e., randomized decision trees and quantum decision
trees, respectively) have been well-studied.
We define zero-suppressed decision trees and show some gaps of
the complexity to deterministic decision trees.
Although our results for this model are quite simple observations,
it implies a difference between zero-suppressed computation and
other computations, and motivates the study of zero-suppressed computation.

The second computation model is branching programs.
Branching programs are known as a computation model to approach
the {\sf L}~vs.~{\sf P} problem.
It is known that the class of decision problems solvable by a nonuniform
family of polynomial-size branching programs is equal to
{\sf L/poly}~\cite{Cob66}.
{\sf L/poly} is the class of decision problems solvable by nonuniform
logarithmic space Turing machines.
If one have proven a superpolynomial lower bound for the size of
branching programs computing a Boolean function in {\sf P}, then
{\sf L} $\neq$ {\sf P}.
In this paper, we define zero-suppressed branching programs,
which actually have the same definition to (unordered) ZDDs,
and consider the following question:
Is the class of decision problems solvable by a nonuniform family of
polynomial-size zero-suppressed branching programs equal to {\sf L/poly}?
We prove three results which are related to the question.
Firstly, we prove that the class of decision problems solvable by
a nonuniform family of polynomial-size width 5 (or arbitrary constant
which is greater than 5) zero-suppressed branching programs is equal
to nonuniform {\sf NC$^1$}.
This corresponds to the well-known Barrington's theorem~\cite{Bar89},
which showed that the class of decision problems solvable by a nonuniform
family of polynomial-size width 5 branching programs is equal to nonuniform
{\sf NC$^1$}.
Secondly, we prove that the class of decision problems solvable by
a nonuniform family of polynomial-size zero-suppressed branching programs
contains {\sf L/poly}, and is contained in nonuniform {\sf NC$^2$}.
Thirdly, we prove that the class of decision problems solvable by
a nonuniform family of polynomial-size read-once zero-suppressed branching
programs is equal to the class of decision problems solvable by
a nonuniform family of polynomial-size read-once (deterministic) branching
programs.
When we prove the third result, we also give some insight of the reason
why the class of decision problems solvable by a nonuniform family of
polynomial-size zero-suppressed branching programs may not be equal
to {\sf L/poly} (Section~\ref{subsec:ro}).

\section{Preliminaries} \label{sec:pre}

A Boolean function is a function $f:\{0,1\}^n \to \{0,1\}$.

A {\em (deterministic) branching program} or {\em binary decision
diagram (BDD)} is a directed acyclic graph.
The nodes of out-degree 2 are called {\em inner nodes} and labeled
by a variable.
The nodes of out-degree 0 are called {\em sinks} and labeled by 0 or 1.
For each inner node, one of the outgoing edges is labeled by 0 and
the other one is labeled by 1.
There is a single specific node called the {\em start node}.
An assignment to the variables determines a computation path from
the start node to a sink node.
The value of the sink node is the output of the branching program or BDD.

A {\em zero-suppressed binary decision diagram (ZDD)} is also
a directed acyclic graph defined in the same way as BDD, and
the only difference is the way to determine the output.
An assignment to the variables determines a computation path from
the start node to a sink node.
The ZDD outputs 1 iff the value of the sink node is 1 and all variables
which are not contained in the computation path are assigned by 0.

Notice that we define BDD and ZDD with no restriction to the appearance of
the variables.
(In some papers, BDD and ZDD mean the ordered one, i.e., the variable
order is fixed and each variable appears at most once on each path.)
The {\em size} of branching programs is the number of its nodes.
If the nodes are arranged into a sequence of levels with edges going only
from one level to the next, then the {\em width} is the size of
the largest level.
A branching program is called {\em (syntactic) read-once} branching program
if each path contains at most one node labeled by each variable.

Decision trees can be defined along the definition of branching programs.
We use this way in this paper.
A {\em (deterministic) decision tree} is a branching program whose graph
is a rooted tree.
The start node of a decision tree is the root.
We define the {\em (deterministic) decision tree complexity} of $f$,
denoted by $D(f)$, as the depth of an optimal (i.e., minimal-depth)
decision tree that computes $f$.

For a nonnegative integer $i$, {\sf NC$^i$} is the class of decision problems
solvable by a uniform family of Boolean circuits with polynomial size,
depth $O(\log^i n)$, and fan-in 2.

\section{Zero-Suppressed Decision Trees}

In this section, we consider decision trees and zero-suppressed
computation.
We firstly define zero-suppressed decision trees and the zero-suppressed
decision tree complexity, and show gaps for the deterministic
decision tree complexity.

\subsection{Definitions}

Since a decision tree is a branching program whose graph is a rooted tree,
zero-suppressed decision trees are naturally defined as follows.

A {\em zero-suppressed decision tree} is also a rooted tree defined
in the same way as deterministic decision tree, and the only difference
is the way to determine the output.
An assignment to the variables determines a computation path from
the start node to a sink node.
The zero-suppressed decision tree outputs 1 iff the value of the sink node
is 1 and all variables which are not contained in the computation path
are assigned by 0.
We define the {\em zero-suppressed decision tree complexity} of $f$,
denoted by $Z(f)$, as the depth of an optimal (i.e., minimal-depth)
zero-suppressed decision tree that computes $f$.

\subsection{Gaps}

In the previous subsection, we have defined the zero-suppressed decision
tree complexity. We can immediately obtain the following gaps.

\begin{theorem}
There is a Boolean function $f$ such that $D(f) = 0$ and
$Z(f) = n$.
\end{theorem}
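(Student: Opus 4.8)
The plan is to exhibit an explicit Boolean function $f$ that is constant (so that a depth-$0$ deterministic decision tree computes it) but whose zero-suppressed decision tree complexity is as large as possible, namely $n$. The natural candidate is the constant-$1$ function $f \equiv 1$. A single sink labeled $1$ is a deterministic decision tree of depth $0$ computing it, so $D(f) = 0$ is immediate. The work is entirely in the lower bound $Z(f) = n$.

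First I would recall what it means for a zero-suppressed decision tree $T$ to compute $f \equiv 1$: for \emph{every} assignment $a \in \{0,1\}^n$, the computation path of $a$ must end at a sink labeled $1$ \emph{and} every variable not queried along that path must be set to $0$ in $a$. Now consider the all-ones assignment $a = (1,1,\dots,1)$. Its computation path must end at a $1$-sink, and since no variable can be assigned $0$ in $a$, \emph{every} variable $x_1,\dots,x_n$ must appear on this path. A path that queries $n$ distinct variables has at least $n$ inner nodes, hence depth at least $n$; therefore $Z(f) \geq n$. (If one wished, one could instead take the path of any $a$ and argue it must contain all variables set to $1$ in $a$, but the all-ones assignment already forces the full depth.)

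For the matching upper bound $Z(f) \leq n$, I would write down the obvious depth-$n$ zero-suppressed decision tree: a single path that queries $x_1, x_2, \dots, x_n$ in order, at each node following either outgoing edge (the labels $0/1$ do not matter since the path is the same), ending at a $1$-sink. On any assignment, this path queries all $n$ variables, so the "all unqueried variables are $0$" condition is vacuous, and the output is $1$. Hence this tree computes $f \equiv 1$ and has depth $n$, giving $Z(f) \leq n$. Combined with the lower bound, $Z(f) = n$.

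I do not expect any real obstacle here; the only subtlety to state carefully is the semantic condition of zero-suppressed acceptance and the observation that a computation path is \emph{forced} to include every variable that is set to $1$ under a given accepted assignment. Once that is spelled out, the all-ones assignment does all the work for the lower bound. This also makes transparent the point the paper is driving at: unlike deterministic, randomized, or quantum decision trees, zero-suppressed decision trees cannot represent even the constant function cheaply, because "not querying a variable" is interpreted as asserting that variable is $0$.
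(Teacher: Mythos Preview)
Your proposal is correct and uses exactly the function the paper chooses, namely the constant $f\equiv 1$; the paper's own proof is the single line ``Let $f=1$'' together with a figure, so you have simply written out the lower-bound argument (via the all-ones assignment forcing every variable onto the accepting path) and the depth-$n$ construction that the paper leaves implicit.
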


\begin{proof}
Let $f = 1$.
(See also Figure~\ref{fig:f}.)
\end{proof}

\begin{figure}[t]
  \begin{center}
    \includegraphics[scale=0.6]{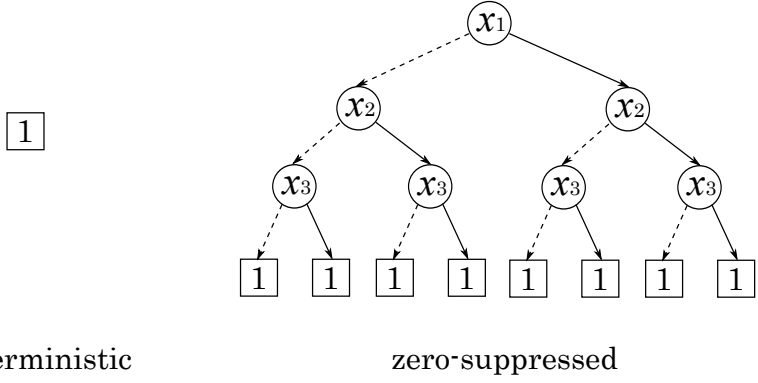}
    \caption{Decision trees computing $f$ for $n = 3$}
    \label{fig:f}
  \end{center}
\end{figure}

\begin{theorem}
There is a Boolean function $g$ such that $D(g) = n$ and
$Z(g) = 0$.
\end{theorem}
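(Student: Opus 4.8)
The plan is to take $g$ to be the indicator of the all-zeros input, i.e.\ $g(x) = 1$ if and only if $x_1 = x_2 = \cdots = x_n = 0$; equivalently $g = \bigwedge_{i=1}^{n} \overline{x_i}$. First I would verify $Z(g) = 0$. Consider the zero-suppressed decision tree consisting of a single node, which is a sink labeled $1$. On any input the computation path is just this one node, and it contains no variable. By the definition of zero-suppressed decision trees, the output is $1$ iff the sink value is $1$ and every variable not contained in the computation path is assigned $0$; since no variable lies on the path, this says the output is $1$ iff all $n$ variables are $0$. Hence this depth-$0$ zero-suppressed decision tree computes $g$, so $Z(g) = 0$.

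Next I would show $D(g) = n$. The upper bound $D(g) \le n$ is immediate, since a decision tree may simply query all $n$ variables. For the lower bound I would use the standard adversary (evasiveness) argument: whenever a deterministic decision tree queries a variable $x_i$, the adversary answers $0$. Suppose the tree reaches a leaf after querying only the variables in a set $S$ with $|S| < n$. Along this root-to-leaf path every answer is $0$, so the all-zeros input is consistent with the path, forcing $g = 1$; but the input that is $0$ on $S$ and $1$ on some variable outside $S$ is also consistent with the path, forcing $g = 0$. Thus no label at this leaf is correct for $g$, a contradiction. Therefore every leaf lies at depth $n$, so $D(g) \ge n$, and hence $D(g) = n$.

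The whole argument is elementary, so I do not anticipate a genuine obstacle. The only point needing care is applying the definition of zero-suppressed decision trees in the degenerate case where the computation path carries no variable: the ``all unqueried variables are assigned $0$'' clause then ranges over all $n$ variables, which is precisely what pins $g$ down to the all-zeros indicator and makes the single-sink tree work. As an alternative to the adversary argument one could simply note that $g$ is evasive because it is sensitive to every coordinate at the point $0^n$.
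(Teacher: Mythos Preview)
Your proof is correct and uses exactly the same witness function as the paper, $g = \neg x_1 \land \cdots \land \neg x_n$; the paper's own proof merely names this $g$ and defers the details to a figure. Your write-up is more complete than the paper's, spelling out both the degenerate single-sink zero-suppressed tree and the evasiveness argument for $D(g)=n$, but the underlying idea is identical.
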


\begin{proof}
Let $ g = \neg x_{1} \land \neg x_{2} \land \cdots \land \neg x_{n}$.
(See also Figure~\ref{fig:g}.)
\end{proof}

\begin{figure}[t]
  \begin{center}
    \includegraphics[scale=0.6]{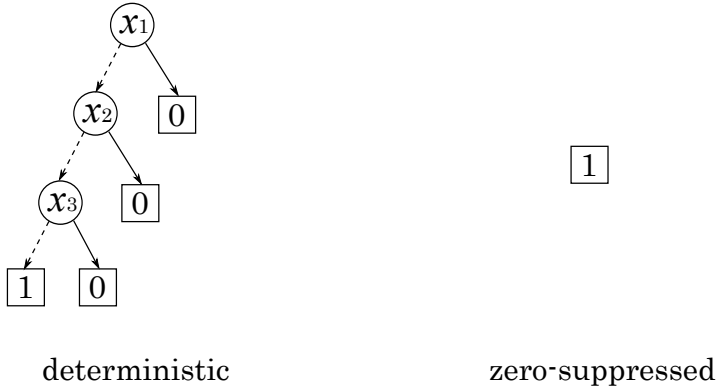}
    \caption{Decision trees computing $g$ for $n = 3$}
    \label{fig:g}
  \end{center}
\end{figure}

Thus, the advantages and disadvantages of deterministic and
zero-suppressed decision trees strongly depend on the Boolean function
which decision trees compute.
Although these two theorems are quite simple observations,
the difference from other computations implies unique behavior
of zero-suppressed computation.

\begin{proposition}
$Q_2(f) \leq R_2(f) \leq D(f)$.
\end{proposition}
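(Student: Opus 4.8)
The plan is to establish the two inequalities separately, each by a routine simulation argument, after fixing the standard definitions of the bounded-error randomized query complexity $R_2(f)$ and the bounded-error quantum query complexity $Q_2(f)$ (with the usual convention that an algorithm computes $f$ if, on every input $x$, it outputs $f(x)$ with probability at least $2/3$, and that the cost is the worst-case number of queries).

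For the right inequality $R_2(f) \le D(f)$: a deterministic decision tree is the special case of a randomized decision tree whose internal coin flips are trivial, i.e., the distribution over deterministic trees is concentrated on a single tree. Hence an optimal depth-$D(f)$ deterministic decision tree for $f$ is in particular a randomized algorithm that errs with probability $0 \le 1/3$ and makes at most $D(f)$ queries on every input, so $R_2(f) \le D(f)$.

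For the left inequality $Q_2(f) \le R_2(f)$: I would show that any randomized decision tree making at most $T$ queries can be simulated by a quantum query algorithm making at most $T$ queries with the same success probability on every input. View the randomized tree as a probability distribution $\mu$ over deterministic trees, each of which, after padding with dummy queries, makes exactly $T$ queries. Prepare an ancilla register in the superposition $\sum_r \sqrt{\mu(r)}\,|r\rangle$; conditioned on $r$, run the deterministic tree $r$ coherently, where each of its $T$ queries is a standard oracle call on an index register whose contents are a unitary function of $r$ and of the previously read bits, and the final leaf label is written to an output register. Measuring the output register yields $f(x)$ with probability exactly $\Pr_{r\sim\mu}[\text{tree } r \text{ correct on } x] \ge 2/3$, using exactly $T$ queries; hence $Q_2(f) \le R_2(f)$.

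The main obstacle I expect is purely bookkeeping in the quantum simulation: (i) equalizing the number of queries across the support of $\mu$ by inserting dummy queries so that every branch is unitary-compatible and costs exactly $T$; (ii) handling the distinction between worst-case and expected query cost in the definition of $R_2$, the expected-cost version requiring a standard truncation (Markov) argument at a negligible additive cost in error; and (iii) reconciling the error-parameter conventions between the classical and quantum models. None of these is deep, and composing the two inequalities gives $Q_2(f) \le R_2(f) \le D(f)$.
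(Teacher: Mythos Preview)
Your argument is correct and is the standard one: a deterministic decision tree is a degenerate randomized tree with zero error, and a randomized tree can be simulated coherently by a quantum query algorithm with the same query count and success probability (after padding to equalize depths). The bookkeeping issues you flag are real but routine.

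However, the paper gives no proof of this proposition at all. It simply records the chain of inequalities as a known fact and immediately refers the reader to Section~3 of the survey of Buhrman and de~Wolf for the definitions of $Q_2(f)$ and $R_2(f)$ and for further details. The proposition is included only to contrast the monotone relationship $Q_2(f)\le R_2(f)\le D(f)$ among the quantum, randomized, and deterministic models with the incomparability of $Z(f)$ and $D(f)$ established just before it. So your proposal supplies a genuine proof where the paper is content to quote the literature; there is nothing to compare on the paper's side beyond the citation.
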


$Q_2(f)$ and $R_2(f)$ are variants of decision tree complexity
with quantum computation and randomized computation, respectively.
For the definitions and the more details, we refer to Section~3
of the survey paper~\cite{BW02}.

\section{Zero-Suppressed Branching Programs}

In this section, we consider branching programs and zero-suppressed
computation.
Branching programs and BDDs have a same definition as we defined
in Section~\ref{sec:pre}.
Naturally, we define {\em zero-suppressed branching programs} as it has
the same definition to ZDDs.

\subsection{Constant-width zero-suppressed branching programs}

Firstly, we prove two lemmas, which are used also in the following subsection.

\begin{lemma} \label{lem:conv1}
Any deterministic branching program of $n$ variables, size $s$, and width $w$
can be converted to a zero-suppressed branching programs of size $s+n$ and
width $w$.
\end{lemma}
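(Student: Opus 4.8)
To prove Lemma~\ref{lem:conv1}, the plan is to prepend to the given program a chain that queries every variable and whose branching is irrelevant, so that every computation path contains all $n$ variables and the zero-suppression side condition becomes vacuous. Let $P$ be a deterministic branching program for $f$ of size $s$ and width $w$. First I would introduce $n$ fresh inner nodes $v_1, \dots, v_n$, label $v_i$ with the variable $x_i$, set both outgoing edges of $v_i$ (the one labeled $0$ and the one labeled $1$) to point to $v_{i+1}$ for $i < n$, and set both outgoing edges of $v_n$ to point to the start node of $P$. The new start node is $v_1$; all nodes, labels, edges and the two sinks of $P$ are left unchanged. Call the resulting zero-suppressed branching program $Z$; it has $s+n$ nodes.

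Next I would check correctness. Fix an assignment $a \in \{0,1\}^n$. The computation path of $Z$ on $a$ begins by traversing $v_1 \to v_2 \to \cdots \to v_n$ --- forced, independently of $a$ --- and then continues from the start node of $P$ exactly as the computation path of $P$ on $a$ does, since the relevant part of the graph is identical, ending at whichever sink $P$ reaches. Because this path already passes through $v_1, \dots, v_n$, it contains every variable $x_1, \dots, x_n$; hence no variable lies outside the path and the zero-suppression condition is automatically satisfied. Therefore $Z$ outputs the value of the sink reached, which is exactly the output of $P$ on $a$, i.e.\ $f(a)$.

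Finally, for the size and width: $Z$ has the $s$ nodes of $P$ plus the $n$ new nodes, so its size is $s+n$. Taking a leveled arrangement of $P$ of width $w$ and placing $v_1, \dots, v_n$ as $n$ new single-node levels in front of it keeps all edges going from one level to the next (the $v_i$-chain is a path, and $v_n$ feeds into the old first level), and each new level has width $1 \le w$; hence the width of $Z$ is $\max(w,1) = w$. I do not expect a real obstacle here: the construction is entirely local, and the only points to get right are that making an added node's two out-edges coincide is legal and that forcing all $n$ variables onto every path is precisely what trivializes the zero-suppression rule.
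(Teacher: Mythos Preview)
Your proof is correct and uses essentially the same idea as the paper's: insert a length-$n$ chain of dummy nodes, one per variable with both out-edges going to the same successor, so that every accepting computation path contains all $n$ variables and the zero-suppression condition becomes vacuous. The only difference is placement --- you prepend the chain before the original start node, whereas the paper splices it in just before the $1$-sink by redirecting all edges into the $1$-sink to the head of the chain; either way the size is $s+n$ and the width stays $w$.
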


\begin{proof}
Let $G$ be a deterministic branching program of $n$ variables, size $s$,
and width $w$.
We convert $G$ to a zero-suppressed branching program as follows.
We add $n$ nodes, $v_1, v_2, \ldots, v_n$, such that $v_i$ is labeled
by $x_i$ for $1 \leq i \leq n$,
and connect two outgoing edges of $v_i$ to $v_{i+1}$ for $1 \leq i \leq n-1$,
and connect two outgoing edges of $v_n$ to the 1-sink,
and connect all edges which are connected to the 1-sink to $v_1$.

In the obtained zero-suppressed branching program, every computation path
to the 1-sink contains all variables.
Thus, by the definition of zero-suppressed branching programs, $G$ and
the obtained zero-suppressed branching program compute the same Boolean
function.
\end{proof}

\begin{lemma} \label{lem:conv2}
Any zero-suppressed branching programs of $n$ variables, polynomial size,
and width $w$ can be converted to a Boolean circuit of polynomial size and
depth $O(\log w \log n)$.
\end{lemma}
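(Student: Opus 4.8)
The plan is to express the output of the given zero-suppressed branching program $G$ directly as a circuit: for every level I would compute the node that the computation path reaches at that level, and then check the two conditions of zero-suppressed acceptance --- reaching the $1$-sink, and having every variable not read on the path assigned $0$.

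First I would fix a leveled form. Since $G$ has width $w$ it is leveled, with levels $L_0=\{v_0\},L_1,\dots,L_\ell$, each of size at most $w$, and $\ell\le s=\mathrm{poly}(n)$; I may assume $w\ge 2$. I would fix a common set $V$ with $|V|=w+2$ containing two distinguished ``absorbing'' elements $h_0,h_1$, view each level $L_i$ as a subset of $V$, identify every $1$-sink with $h_1$ and every $0$-sink with $h_0$, and leave the spare elements of $V$ unused. For an input $x$, level $i$ then induces a transition map $\sigma^x_i:V\to V$ which fixes $h_0$, $h_1$ and the spare elements, and which sends an inner node labeled by $x_k$ to the endpoint of its $x_k$-edge under the sink-to-absorbing identification. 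The reason for the absorbing elements is that sinks occurring before the last level must be handled this way rather than by padding $G$ with dummy labeled inner nodes: a labeled dummy node would change which variables appear on the computation path and hence change the zero-suppressed semantics. Each table $\sigma^x_i$ is produced from $x$ by a constant-depth circuit of size $O(w\log w)$, as each of its entries is a multiplexer controlled by a single input bit. The node reached at level $i$ is then $n_i(x)=(\sigma^x_i\circ\cdots\circ\sigma^x_1)(v_0)$, the computation path is $n_0(x),n_1(x),\dots$ up to the first entry lying in $\{h_0,h_1\}$, and $n_\ell(x)\in\{h_0,h_1\}$ in all cases (level $\ell$ consists of sinks only).

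Next I would compute all the $n_i(x)$ in parallel using a standard parallel-prefix (scan) circuit over the monoid of maps $V\to V$ under composition: it outputs all $\ell$ prefix compositions using $O(\log\ell)$ levels of the composition operation, and one composition of two maps on a domain of size $w+2$ is a block of $O(w)$ multiplexers, of depth $O(\log w)$ and size $\mathrm{poly}(w)$; the value $n_i(x)$ is simply the $v_0$-entry of the $i$-th prefix table, read off with no extra depth. This part has depth $O(\log\ell\log w)$ and size $\mathrm{poly}(n)$. Finally I would output $A\wedge\bigwedge_{k=1}^{n}(\neg x_k\vee P_k)$, where $A$ is the test $n_\ell(x)=h_1$ and $P_k$ is the OR, over all levels $i<\ell$ and all inner nodes $v\in L_i$ labeled by $x_k$, of the equality test $n_i(x)=v$, so that $P_k$ indicates that $x_k$ labels a node on the computation path; each equality test is an AND of $O(\log w)$ literals, each $P_k$ is an OR of at most $s$ of them, and the outer formula adds $O(\log n)$ depth. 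Using $\log\ell=O(\log n)$ and $\log w=O(\log n)$, the total depth is $O(\log w\log n)$ and the total size $\mathrm{poly}(n)$, which is the claim. There is no single hard step here: the one point needing care is exactly the interaction between sinks that halt early and the zero-suppression rule, which is precisely what the absorbing elements resolve --- once the path halts at level $j$ we have $n_i(x)\in\{h_0,h_1\}$ for all $i\ge j$, so $P_k$ sees precisely the variables read before halting and $A$ reads off whether the reached sink accepts, which is exactly the zero-suppressed acceptance condition. The remainder is the familiar iterated-composition argument for width-$w$ branching programs, so I do not expect a further obstacle.
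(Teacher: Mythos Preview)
Your proof is correct and follows the same Barrington-style iterated-composition strategy as the paper. The one variation worth noting is in the zero-suppression bookkeeping: the paper augments each transition with an $n$-bit characteristic vector of the variables already queried and carries that vector through a single balanced composition tree, whereas you keep the transitions as plain maps $V\to V$, use a parallel-prefix circuit to obtain every intermediate node $n_i(x)$, and then reconstruct the set of queried variables from those; both routes give the stated $O(\log w\log n)$ depth and polynomial size, and your explicit handling of early sinks via absorbing states $h_0,h_1$ is a detail the paper leaves implicit.
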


\begin{proof}
We extend the proof of one direction of the Barrington's theorem.
A deterministic branching program of $n$ variables, polynomial size,
and width 5 can be converted to a Boolean circuit of polynomial size and
depth $O(\log n)$ as follows.
Two levels of a deterministic branching program are composed to one level
by a circuit of a constant depth.
Doing this in parallel and repeating it $O(\log n)$ times
yield the desired circuit of depth $O(\log n)$.

If the width is $w$, two levels of a deterministic branching program are
composed to one level by a circuit of depth $O(\log w)$.
For the case of zero-suppressed branching programs, we need to memorize
the variables contained in the computation path, which can be done with
no increase of the depth of the circuit.

Actual encoding of each level is as follows.
At most $w$ nodes of each level can be numbered with $\lceil \log w \rceil$
bits.
For each outgoing edge of each node of a level,
$\lceil \log w \rceil + n$ bits are assigned.
The first $\lceil \log w \rceil$ bits represent the node which
the outgoing edge connects to.
The other $n$ bits represent whether each of $n$ variables is contained
in the computation path when the outgoing edge is used in computation.
\end{proof}

For the case that the width of zero-suppressed branching programs is
a constant, we determine that the equivalent class is {\sf NC$^1$},
which is an analog of the Barrington's theorem~\cite{Bar89} for
deterministic branching programs.

\begin{theorem}
For any constant $w \geq 5$, the class of decision problems solvable by
a nonuniform family of polynomial-size width $w$ zero-suppressed branching
programs is equal to nonuniform {\sf NC$^1$}.
\end{theorem}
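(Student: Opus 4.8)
The plan is to prove the two inclusions separately. For the inclusion ``nonuniform $\mathsf{NC}^1 \subseteq$ width-$w$ zero-suppressed branching programs,'' I would invoke Barrington's theorem to convert a given $\mathsf{NC}^1$ circuit into a width-$5$ deterministic branching program of polynomial size, and then apply Lemma~\ref{lem:conv1} to convert that deterministic branching program into a zero-suppressed branching program of size $s+n$ and the same width $5$. Since the target class allows any constant width $w \geq 5$, and $5 \leq w$, a width-$5$ zero-suppressed branching program is in particular a width-$w$ one (or can be padded up trivially), so this direction is immediate once the two cited ingredients are in place. The size remains polynomial because $s$ is polynomial and we only add $n$ nodes.

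For the reverse inclusion ``width-$w$ zero-suppressed branching programs $\subseteq$ nonuniform $\mathsf{NC}^1$,'' I would apply Lemma~\ref{lem:conv2}: any zero-suppressed branching program of polynomial size and width $w$ converts to a Boolean circuit of polynomial size and depth $O(\log w \log n)$. Since $w$ is a constant, $\log w = O(1)$, so the depth is $O(\log n)$, which is exactly the depth bound for $\mathsf{NC}^1$. The circuit has polynomial size and, after the standard conversion, fan-in $2$; nonuniformity is preserved because the conversion is purely syntactic on each fixed input length. This gives containment in nonuniform $\mathsf{NC}^1$.

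The two inclusions together yield equality, completing the proof. I do not expect any genuine obstacle here: both hard technical steps have been isolated into Lemmas~\ref{lem:conv1} and~\ref{lem:conv2}, and the only remaining work is to check that the parameters line up — specifically, that ``constant $w \geq 5$'' makes $\log w$ a constant so that Lemma~\ref{lem:conv2}'s depth bound collapses to $O(\log n)$, and that Lemma~\ref{lem:conv1} plus Barrington's theorem supplies the other direction with the width landing at exactly $5 \leq w$. The mildest point of care is ensuring the width can be taken to be exactly $w$ rather than merely ``at most $w$'' in the first inclusion, but this is handled by trivial padding (adding unused nodes to levels), so it is not a real difficulty.
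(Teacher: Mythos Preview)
Your proposal is correct and follows essentially the same approach as the paper: invoke Barrington's theorem together with Lemma~\ref{lem:conv1} for one inclusion, and apply Lemma~\ref{lem:conv2} with the observation that $\log w = O(1)$ for the other. The only extra remark you make---about padding width~$5$ up to width~$w$---is a cosmetic point the paper does not bother to mention, and everything else matches.
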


\begin{proof}
All problems in nonuniform {\sf NC$^1$} can be solvable by a nonuniform family
of polynomial-size width 5 deterministic branching programs~\cite{Bar89}.
By Lemma~\ref{lem:conv1}, the problems can be solvable
also by a nonuniform family of polynomial-size width 5 zero-suppressed
branching programs.
Thus, the class contains nonuniform {\sf NC$^1$}.

Consider a problem solvable by a nonuniform family of polynomial-size
width $w$ zero-suppressed branching programs.
By Lemma~\ref{lem:conv2}, the problem is also solvable
by a nonuniform family of Boolean circuits of polynomial size and depth $O(\log w \log n)$.
Since $w$ is a constant, the class is contained in nonuniform {\sf NC$^1$}.
\end{proof}

\subsection{General zero-suppressed branching programs}

The main question for zero-suppressed branching programs is whether
the class of decision problems solvable by a nonuniform family of
polynomial-size zero-suppressed branching programs is equal to {\sf L/poly}
or not.
We show a weaker result.

\begin{theorem} \label{thrm:gene}
The class of decision problems solvable by a nonuniform family of
polynomial-size zero-suppressed branching programs contains {\sf L/poly},
and is contained in nonuniform {\sf NC$^2$}.
\end{theorem}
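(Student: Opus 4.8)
The plan is to prove the two containments separately, reusing the two conversion lemmas already established.

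\medskip

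\noindent\textbf{Containment of {\sf L/poly}.} First I would recall that {\sf L/poly} equals the class of decision problems solvable by a nonuniform family of polynomial-size deterministic branching programs~\cite{Cob66}. Given such a family, Lemma~\ref{lem:conv1} converts each branching program of $n$ variables, size $s$, and some width $w$ into a zero-suppressed branching program of size $s+n$ and the same width. Since $s$ is polynomial in $n$, so is $s+n$, hence the resulting family is a nonuniform family of polynomial-size zero-suppressed branching programs computing the same problem. This establishes that the zero-suppressed class contains {\sf L/poly}. This direction is essentially immediate from Lemma~\ref{lem:conv1}; no real obstacle here.

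\medskip

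\noindent\textbf{Containment in nonuniform {\sf NC$^2$}.} For the upper bound I would start from a polynomial-size zero-suppressed branching program and apply Lemma~\ref{lem:conv2}. The subtlety is that Lemma~\ref{lem:conv2} gives depth $O(\log w \log n)$, and for a general (polynomial-size) zero-suppressed branching program the width $w$ can be as large as the size, i.e.\ polynomial in $n$. Then $\log w = O(\log n)$, so the depth bound becomes $O(\log^2 n)$, and the circuit size remains polynomial (the level-composition step multiplies sizes by polynomial factors a logarithmic number of times, which is accounted for in the proof of Lemma~\ref{lem:conv2}). A circuit family of polynomial size and depth $O(\log^2 n)$ with fan-in~2 is exactly nonuniform {\sf NC$^2$}, so the zero-suppressed class is contained in nonuniform {\sf NC$^2$}. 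Combining the two directions yields the theorem.

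\medskip

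\noindent The main thing to be careful about is the bookkeeping in the $O(\log^2 n)$-depth direction: one must confirm that, when $w$ is polynomial in $n$, the repeated pairwise composition of levels from Lemma~\ref{lem:conv2} still yields polynomial \emph{size} (each composed level is encoded with $\lceil\log w\rceil + n = O(n)$ bits per edge, and there are $O(\log n)$ composition rounds, so the blow-up is polynomial) — but this is all subsumed by Lemma~\ref{lem:conv2} as stated, so the argument is short. The harder conceptual question — whether the class is actually equal to {\sf L/poly} rather than sitting strictly between {\sf L/poly} and {\sf NC$^2$} — is exactly what the paper leaves open, and the read-once discussion in Section~\ref{subsec:ro} is meant to shed light on why equality may fail.
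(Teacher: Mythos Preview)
Your proposal is correct and follows essentially the same approach as the paper: both directions are obtained directly from Lemmas~\ref{lem:conv1} and~\ref{lem:conv2}, with the only observation needed for the upper bound being that a polynomial-size program has polynomial width, hence $\log w = O(\log n)$ and the depth bound from Lemma~\ref{lem:conv2} becomes $O(\log^2 n)$. Your additional remarks about size bookkeeping and the open question are accurate but go beyond what the paper's own (very brief) proof spells out.
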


\begin{proof}
All problems in {\sf L/poly} can be solvable by a nonuniform family of
polynomial-size deterministic branching programs~\cite{Cob66}.
By Lemma~\ref{lem:conv1}, the problems can be solvable
also by a nonuniform family of polynomial-size zero-suppressed branching
programs.
Thus, the class contains {\sf L/poly}.

Consider a problem solvable by a nonuniform family of polynomial-size
zero-suppressed branching programs.
Obviously, the width of the zero-suppressed branching programs is
a polynomial of $n$.
Thus, by Lemma~\ref{lem:conv2}, the problem is also solvable
by a nonuniform family of Boolean circuits of polynomial size and depth $O(\log^2 n)$.
Therefore, the class is contained in nonuniform {\sf NC$^2$}.
\end{proof}

\subsection{Read-once zero-suppressed branching programs} \label{subsec:ro}

In deterministic branching programs, the states in computation are decided
only by the node which was reached in computation.
Thus, the number of the states is at most the size of the branching
program, and, if the size is at most polynomial, then each state can
be represented by logarithmic space, which leads to the fact that
the class of decision problems solvable by a nonuniform
family of polynomial-size deterministic branching programs is equal to
{\sf L/poly}.
On the other hand, in zero-suppressed branching programs, the states
in computation are not decided only by the node which was reached
in computation.
It depends on the variables which were contained in the computation path.
This is the main reason why the class of decision problems solvable
by a nonuniform family of polynomial-size zero-suppressed branching programs
may not be equal to {\sf L/poly}.
Note that the information of the passed variables cannot be saved in
logarithmic space.

In this subsection, we consider a simple case.
If deterministic and zero-suppressed branching programs are read-once,
then we can convert them to each other with polynomial increase of the size.

\begin{theorem}
The class of decision problems solvable by a nonuniform family of
polynomial-size read-once zero-suppressed branching programs is equal to
the class of decision problems solvable by a nonuniform family of
polynomial-size read-once deterministic branching programs.
\end{theorem}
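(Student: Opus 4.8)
The plan is to prove the two inclusions separately, in each case retaining the underlying graph and modifying only the tail of the computation so as to reconcile the two output conventions while keeping the program read-once and of polynomial size. Note that Lemma~\ref{lem:conv1} cannot simply be invoked here, since the chain it appends re-queries every variable and thus destroys the read-once property.

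For the inclusion that polynomial-size read-once deterministic branching programs are simulated by polynomial-size read-once zero-suppressed branching programs, the issue is that a deterministic program outputs $1$ on an accepting path regardless of the variables not queried on that path, whereas a zero-suppressed program following the same path additionally forces every un-queried variable to $0$ and would therefore reject inputs the deterministic program accepts. So the goal is to transform a read-once deterministic branching program $G$ for $f$ into an equivalent read-once deterministic branching program $G'$ in which every accepting path queries all $n$ variables; read under the zero-suppressed convention, $G'$ still computes $f$ because no accepting path leaves a variable free. I would build $G'$ by inserting, along each accepting path once it has reached the $1$-sink, ``dummy'' nodes (both out-edges going to the same successor) querying the variables not yet queried on that path: being dummy they do not change the deterministic output, and read-once is preserved because exactly the missing variables are added. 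For the reverse inclusion, given a read-once zero-suppressed branching program $Z$ for $f$, I would instead append after each accepting path a cascade of genuine nodes querying the not-yet-queried variables and leading to the $1$-sink precisely when all of them are $0$ (and to the $0$-sink otherwise); this cascade implements verbatim the extra acceptance condition of the zero-suppressed convention, so the resulting read-once deterministic branching program computes $f$. Combining the two inclusions proves the theorem.

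The main obstacle is bounding the size of the transformed programs. The set of variables not yet queried when an accepting path reaches the $1$-sink is not determined by the $1$-sink alone but depends on the path, and a read-once branching program can realize exponentially many such sets at a single node, so attaching one cascade per set is not affordable. The plan is first to put the program into a normal form --- e.g.\ level it by depth, splitting each node into polynomially many copies indexed by the length of the path reaching it, so that all paths into a fixed copy have queried the same number of variables --- and then to argue that the inserted dummy, respectively verification, structure need only be sensitive to polynomially much information: in the first direction the dummy cascades can be amalgamated because dummy queries never clash with the structure of $G$, and in the second direction the part of the cascade handling a given ``loose'' variable need only react to whether that single variable has already been queried. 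Carrying out this amalgamation so that only polynomially many nodes are added is the step I expect to require the most care and is the heart of the argument.
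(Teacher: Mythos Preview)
Your overall strategy---insert dummy queries for the deterministic-to-zero-suppressed direction and verification queries for the reverse---is exactly right, and you correctly identify the obstacle: the set of un-queried variables at the $1$-sink is path-dependent and can range over exponentially many sets. But the fix you sketch does not close the gap. Leveling by depth makes the \emph{number} of queried variables path-independent at each node, not the \emph{set}; after leveling, two paths of the same length into a copy of a node can still have queried entirely disjoint variable sets, so you are no closer to knowing which dummy or verification cascade to attach. The subsequent ``amalgamation'' step is left as a hope rather than a construction, and I do not see how to make it precise without essentially rediscovering the idea below.

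The paper's key move is to normalize not only at the $1$-sink but at \emph{every} node, processed in topological order $v_1,\ldots,v_s$. Inductively, once every path into each of $v_1,\ldots,v_{k-1}$ queries a fixed set $X_i$, set $X=\bigcup_{(v_j,v_k)\text{ an edge}} X_j$ and, on each incoming edge from $v_i$ to $v_k$, splice in a chain of nodes labeled by the variables of $X\setminus X_i$ (both out-edges to the next chain node in the first direction; $0$-edge to the next chain node and $1$-edge to the $0$-sink in the second). Read-once is preserved because $X\setminus X_i$ is disjoint from $X_i$, and now every path into $v_k$ queries exactly $X$. Since at most $n$ nodes are added per edge and there are at most $2s$ edges, the size is at most $s+2ns$; the polynomial bound falls out immediately with no amalgamation argument. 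The insight you are missing is that pushing the padding back to every intermediate node, rather than deferring it all to the sink, collapses the exponentially many path histories to a single well-defined set per node.
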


\begin{proof}
We prove two lemmas.

\begin{lemma}
Any read-once deterministic branching program of $n$ variables and size $s$
can be converted to a read-once zero-suppressed branching program of
size $s+2ns$.
\end{lemma}

\begin{proof}
Note that the way of the proof of Lemma~\ref{lem:conv1} does not give
a read-once zero-suppressed branching program.
We need more consideration to the place where new nodes are added.

Let $G$ be a read-once deterministic branching program of $n$ variables
and size $s$.
Let $v_1, v_2, \ldots, v_s$ be the nodes in $G$ such that
$v_1, v_2, \ldots, v_s$ is a topologically sorted order.
We convert $G$ so that every computation path which reaches to a node
contains the same all variables, for each node from $v_1$ to $v_s$.
Assume that every computation path which reaches to $v_i$
contains the same variables for each $1 \leq i \leq k-1$.
We convert $G$ so that every computation path which reaches to $v_k$
contains the same variables as follows.
Let $X_i$ be the set of variables which are contained in computation
paths to $v_i$, for $1 \leq i \leq k-1$.
Let $X$ be the union of $X_j$ such that there is an edge from $v_j$
to $v_k$.
Let $X_i' = X - X_i$.
For every edge $e$ from $v_i$ to $v_k$, $1 \leq i \leq k-1$,
we add $|X_i'|$ nodes, $u_1, u_2, \ldots, u_{|X_i'|}$, such that the nodes
are labeled by the variables contained in $X_i'$,
and connect two outgoing edges of $u_j$ to $u_{j+1}$ for
$1 \leq j \leq |X_i'|-1$,
and connect two outgoing edges of $u_{|X_i'|}$ to $v_k$,
and connect $e$ to $u_1$.
Let $G'$ be the obtained branching program.
If computation paths to the 1-sink in $G'$ do not contain all variables,
we modify $G'$ to contain all variables by a similar way.

$G'$ is read-once, since added nodes are labeled by the variables
contained in $X_i'$.
In $G'$, every computation path to the 1-sink contains
all variables.
Thus, by the definition of zero-suppressed branching programs,
$G$ and $G'$ compute the same Boolean function.
The number of added node is at most $n$ for each edge.
\end{proof}

\begin{lemma}
Any read-once zero-suppressed branching program of $n$ variables and size $s$
can be converted to a read-once deterministic branching program of size $s+2ns$.
\end{lemma}

\begin{proof}
Let $G$ be a read-once zero-suppressed branching program of $n$ variables
and size $s$.
Let $v_1, v_2, \ldots, v_s$ be the nodes in $G$ such that
$v_1, v_2, \ldots, v_s$ is a topologically sorted order.
We convert $G$ so that every computation path which reaches to a node
contains the same all variables, for each node from $v_1$ to $v_s$.
Assume that every computation path which reaches to $v_i$
contains the same variables for each $1 \leq i \leq k-1$.
We convert $G$ so that every computation path which reaches to $v_k$
contains the same variables as follows.
Let $X_i$ be the set of variables which are contained in computation
paths to $v_i$, for $1 \leq i \leq k-1$.
Let $X$ be the union of $X_j$ such that there is an edge from $v_j$
to $v_k$.
Let $X_i' = X - X_i$.
For every edge $e$ from $v_i$ to $v_k$, $1 \leq i \leq k-1$,
we add $|X_i'|$ nodes, $u_1, u_2, \ldots, u_{|X_i'|}$, such that the nodes
are labeled by the variables contained in $X_i'$,
and connect the outgoing 0-edge of $u_j$ to $u_{j+1}$ for
$1 \leq j \leq |X_i'|-1$,
and connect the outgoing 0-edge of $u_{|X_i'|}$ to $v_k$,
and connect the outgoing 1-edge of $u_j$ to the 0-sink for
$1 \leq j \leq |X_i'|$,
and connect $e$ to $u_1$.
Let $G'$ be the obtained branching program.
If computation paths to the 1-sink in $G'$ do not contain all variables,
we modify $G'$ to contain all variables by a similar way.

$G'$ is read-once, since added nodes are labeled by the variables
contained in $X_i'$.
By the definition of zero-suppressed branching programs,
$G$ and $G'$ compute the same Boolean function.
The number of added node is at most $n$ for each edge.
\end{proof}

By the two lemmas, the theorem holds.
\end{proof}

\section{Conclusions and Open Problems}

In this paper, we proposed zero-suppressed computation as a new
computation.
For decision trees and branching programs, we could smoothly define
the zero-suppressed versions.
A challenging open problem is to seek another computation model whose
zero-suppressed version is meaningful.
When we consider other computation models (e.g., Boolean circuits),
it is a difficult and interesting problem even to define the appropriate
zero-suppressed version.

Although this paper is mainly written from the viewpoint of computational
complexity theory, the concept of zero-suppressed computation can be
widely applied to various areas.
We show an example.
The exactly-$k$-function $E^n_k(x_1, \ldots, x_n)$
is 1 iff $\Sigma_{i=1}^n x_i = k$.
In the standard formulas, an obvious representation of $E^3_1$ is
$$(x_1 \land \neg x_2 \land \neg x_3) \lor
  (\neg x_1 \land x_2 \land \neg x_3) \lor
  (\neg x_1 \land \neg x_2 \land x_3).$$
In formulas with zero-suppressed computation, $E^3_1$ is simply represented by
$$(x_1)^z \lor (x_2)^z \lor (x_3)^z,$$
where $( )^z$ is a new operation which we define from the concept of
zero-suppressed computation and $(f)^z$ is 1 iff $f=1$ and all variables
which are not contained in $f$ are assigned by 0.

For zero-suppressed branching programs, it remains open whether the class
of decision problems solvable by a nonuniform family of polynomial-size
zero-suppressed branching programs is equal to {\sf L/poly} or not.
We showed some related results to the question in this paper.
By Theorem~\ref{thrm:gene}, there are the following four cases.
\begin{itemize}
\item The class is equal to {\sf L/poly}.
\item The class is equal to nonuniform {\sf NC$^2$}.
\item The class is equal to another known complexity class between
      {\sf L/poly} and nonuniform {\sf NC$^2$}.
\item The class is not equal to any known complexity class.
\end{itemize}
Our observation for zero-suppressed decision trees implies unusual
properties of zero-suppressed computation, which makes us feel
the possibility of some new complexity class.

\bibliographystyle{plain}
\bibliography{zdd}

\end{document}